\newtheorem{algorithm}{Algorithm}
\newtheorem{definition}{Definition}
\newtheorem{theorem}{Theorem}
\begin{document}
\title{The ElGamal cryptosystem over circulant matrices}
\author{Ayan Mahalanobis}
\address{Indian Institute of Science
  Education and Research Pune, Pashan Pune-411021, India}
\email{ayan.mahalanobis@gmail.com}
\thanks{Research supported by a NBHM research grant}
\keywords{The discrete
 logarithm problem, circulant matrices, elliptic curve cryptosystems}
\subjclass[2010]{94A60, 20G40}
%\today
\begin{abstract}
In this paper we study extensively the discrete logarithm problem in
the group of non-singular circulant matrices. The emphasis of this
study was to find the exact parameters for the group of circulant
matrices for a secure implementation. We tabulate these parameters. We
also compare the discrete logarithm problem in the group of circulant
matrices with the discrete logarithm problem in finite fields and with
the discrete logarithm problem in the group of rational points of an
elliptic curve. 
\end{abstract}
\maketitle
\section{Introduction}
Two of the most popular groups used in \emph{the discrete logarithm
  problem} are the group of units of a finite field and the group of
rational points of an elliptic curve over a finite field. The obvious question arises, are
there any other groups? I write this paper to show, that there
are matrix groups -- the \emph{group of non-singular circulant matrices},
which is much better than the finite fields in every aspect and
even better than the elliptic curves when one considers the size of the
field for a secure implementation. The size of the field for a secure
implementation is a huge issue in \emph{public key cryptography}. One of the
reasons, elliptic curves are preferred over a finite field discrete
logarithm problem, is the size of the field for a secure
implementation. In our current state of knowledge, it is believed
that the discrete logarithm problem over $\mathbb{F}_{2^{1028}}$ offers
the same security that of most elliptic curves over
$\mathbb{F}_{2^{160}}$.  As our processors get faster and with the
advent of distributed computing these sizes will grow bigger with time.
In the case of an elliptic curve the rate of growth is much smaller than
that of finite fields. We will see, for circulant matrices
the size of the field for a secure implementation can get even
smaller. The comparison of speed, between circulants and elliptic
curves, in an actual implementation is yet to be done. But, since the
circulants use smaller field, it is
likely that the circulants are faster.

It is known~\cite{ayan,silverman2} that the group of \emph{circulant matrices} offers the
same security of a finite field of about same size, with \textbf{half
  the computational cost}. The other interesting fact about circulant
matrices is the \textbf{size of the field} for a secure
implementation. The arithmetic of the circulant matrices is
implemented over a finite field, very similar to the case of elliptic
curves, where the arithmetic is also  implemented over a finite
field. In the case of circulants, the size of the field can be smaller
than the one used for elliptic curves. This is extensively studied in Section 5, and the results are tabulated in Table~\ref{table2}. To sum it up, the advantage of circulants is that it \textbf{uses smaller field and is faster}.

In this paper, we denote the group of non-singular circulant matrices of size $d$ by C$(d,q)$ and the group of special circulant matrices, i.e., circulant matrices with determinant 1, by SC$(d,q)$ respectively.

\begin{definition}[Circulant matrix C$(d,q)$]
A $d\times d$ matrix over a field $F$ is called a circulant matrix, if every row except
the first row, is a right circular shift of the row above that. So a
circulant matrix is defined by its first row.  One can define a circulant 
matrix similarly using columns. 
\end{definition}
A matrix is a two dimensional object, but a circulant matrix behaves
like a one dimensional object -- given by the
first row or the first column. We will denote
a circulant matrix $C$ of size $d$, with the first row $c_0,c_1,\ldots,c_{d-1}$, by
$C=\text{circ}\left(c_0,c_1,c_2,\ldots,c_{d-1}\right)$.
An example of a circulant $5\times 5$ matrix is:
\[
\begin{pmatrix}
c_0&c_1&c_2&c_3&c_4\\
c_4&c_0&c_1&c_2&c_3\\
c_3&c_4&c_0&c_1&c_2\\
c_2&c_3&c_4&c_0&c_1\\
c_1&c_2&c_3&c_4&c_0
\end{pmatrix}
\]
 One can define a
\emph{representer polynomial} corresponding to the circulant matrix
$C$ as $\phi_C=c_0+c_1x+c_2x^2+\ldots+c_{d-1}x^{d-1}$. The circulants form a commutative ring under matrix
multiplication and matrix addition and is isomorphic to (the
isomorphism being circulant matrix to the representer polynomial)
$\mathcal{R}=\dfrac{F[x]}{x^d-1}$. For more on circulant matrices, see
\cite{davis}.

We will study the discrete logarithm problem in SC$(d,q)$, the special circulant matrix. It is fairly straightforward to see that one can develop a Diffie-Hellman key exchange protocol or the ElGamal cryptosystem from this discrete logarithm problem. The ElGamal cryptosystem over SL$(d,q)$, \emph{the special linear group} of size $d$ over $\mathbb{F}_q$ is described below. Since the special circulant matrix is contained in the special linear group, this description of the ElGamal cryptosystem works for SC$(d,q)$ as well.

All \textbf{fields considered in this paper} are finite and of characteristic $2$.
\section{The ElGamal over SL$(d,q)$}
\begin{description}\label{keyex}
\item[Private Key] $m$, $m\in\mathbb{N}$.
\item[Public Key] $A$ and $A^m$. Where $A\in\text{SL}(d,q)$.
\end{description}
\paragraph{\textbf{Encryption}}
\begin{description}
\item[a] To send a message (plaintext) $\mathbf{v}\in\mathbb{F}_q^d$, Bob computes $A^r$
  and $A^{mr}$ for an arbitrary $r\in\mathbb{N}$.
\item[b] The ciphertext is $\left(A^r,A^{mr}\mathbf{v}^\text{T}\right)$. Where $\mathbf{v}^\text{T}$ is the transpose of $\mathbf{v}$.
\end{description}
\paragraph{\textbf{Decryption}}
\begin{description}
\item[a] Alice knows $m$, when she receives the ciphertext
 $\left(A^r,A^{mr}\mathbf{v}^\text{T}\right)$, she computes $A^{mr}$ from $A^r$,
  then $A^{-mr}$ and then computes $\mathbf{v}$ from $A^{mr}\mathbf{v}^\text{T}$.
\end{description}
 
We show that the security of the ElGamal cryptosystem over SL$(d,q)$, 
is equivalent to the Diffie-Hellman problem in SL$(d,q)$. Since SC$(d,q)$ is contained in SL$(d,q)$, this proves that the security of ElGamal cryptosystem is equivalent to the Diffie-Hellman problem in SC$(d,q)$.
 
Assume that Eve can solve
the Diffie-Hellman problem, then from the public information, she
knows $A^m$. From a ciphertext $\left(A^r,A^{rm}\mathbf{v}^T\right)$ she gets
$A^r$. Since she can solve the Diffie-Hellman problem, she computes
$A^{rm}$ and can decrypt the ciphertext. The converse follows
from the following theorem, which is an adaptation of~\cite[Proposition 2.10]{silverman}

\begin{theorem}
Suppose Eve has access to an oracle that can decrypt arbitrary
ciphertext of the above cryptosystem for any private key, then she can solve the
Diffie-Hellman problem in SL$(d,q)$.
\end{theorem}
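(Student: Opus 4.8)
The plan is to run the standard chosen-ciphertext reduction, adapting the finite-field ElGamal argument referenced just above the statement. Suppose Eve is handed a Diffie-Hellman challenge in SL$(d,q)$: the matrices $A$, $A^m$, and $A^r$, with the exponents $m,r$ unknown, and she must produce $A^{mr}$. The key observation is that $(A,A^m)$ is a syntactically valid public key for the cryptosystem, and that for any vector $\mathbf{y}\in\mathbb{F}_q^d$ the pair $(A^r,\mathbf{y})$ is a syntactically valid ciphertext under it, since $A^r$ is a genuine power of $A$. Feeding the oracle this public key together with the ciphertext $(A^r,\mathbf{y})$, it returns the plaintext $\mathbf{v}$ that the legitimate receiver (the one holding $m$) would recover, namely the $\mathbf{v}$ with $A^{mr}\mathbf{v}^{\mathrm T}=\mathbf{y}$; equivalently $\mathbf{v}^{\mathrm T}=(A^{mr})^{-1}\mathbf{y}$, which is well defined because $\det A^{mr}=1$.

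Concretely, I would query the oracle $d$ times, on the ciphertexts $(A^r,\mathbf{e}_1),\dots,(A^r,\mathbf{e}_d)$, where $\mathbf{e}_j$ is the $j$-th standard basis vector; the $j$-th answer $\mathbf{v}_j$ satisfies $\mathbf{v}_j^{\mathrm T}=(A^{mr})^{-1}\mathbf{e}_j$, which is exactly the $j$-th column of $(A^{mr})^{-1}$. Assembling these $d$ columns recovers the matrix $(A^{mr})^{-1}$ in full, and a single matrix inversion over $\mathbb{F}_q$ then yields $A^{mr}$, solving the challenge. When $A$ in fact lies in SC$(d,q)$ one can do better: $(A^{mr})^{-1}$ is itself circulant, hence determined by its first column, so one oracle call on $(A^r,\mathbf{e}_1)$ already suffices. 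Combined with the trivial converse direction sketched immediately before the theorem -- a Diffie-Hellman solver decrypts any ciphertext -- this establishes the claimed equivalence.

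There is no deep obstacle here; what needs care is bookkeeping rather than mathematics. One must verify that the reduction never uses $m$ or $r$ themselves (only the group elements $A$, $A^m$, $A^r$ appear), that every string handed to the oracle is a legal ciphertext under a legal public key so the oracle is obliged to answer, and that the $d$ queried vectors span $\mathbb{F}_q^d$ so that the reconstructed matrix is pinned down uniquely. The only mild subtlety is the precise reading of ``decrypt \ldots\ for any private key'': I interpret it as the oracle decrypting correctly relative to whatever public key it is presented with, which is exactly what the argument invokes -- and here it is invoked on the single public key $(A,A^m)$, the challenge key itself.
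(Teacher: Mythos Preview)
Your proposal is correct and follows essentially the same chosen-ciphertext reduction as the paper: query the decryption oracle under the public key $(A,A^m)$ on ciphertexts $(A^r,c)$ with $c$ ranging over vectors supported in a single coordinate, read off the columns of $(A^{mr})^{-1}$, and invert. The only cosmetic differences are that the paper phrases things with a scaled basis vector $\widehat{\mathbf{v}}_i$ rather than $\mathbf{e}_j$, and that your extra remark about a single query sufficing in the circulant case does not appear there.
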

\begin{proof}
Let $g=A^a$ and $h=A^b$. Eve takes an arbitrary element
$\mathbf{v}$ in the vector space of dimension $d$ on which SL$(d,q)$
acts. We use the same basis used for the representation of SL$(d,q)$. 
Then $\mathbf{v}=(\mathbf{v}_1,\mathbf{v}_2,\ldots,\mathbf{v}_d)$ where
$\mathbf{v}_i\in\mathbb{F}_q^\times$. Let $\widehat{\mathbf{v}}_i=(0,\ldots,\mathbf{v}_i,\ldots,0)$
and $c=\widehat{\mathbf{v}}_i^T$. She pretends that $A$ and $A^a$ is a public
key. Sends that information to the oracle. Then asks the oracle to
decrypt $(h,c)$. Oracle sends back to Eve, $h^{-a}c$. Eve knowing
$\mathbf{v}$, computes the $i\textsuperscript{th}$ column of
$A^{-ab}$ from $h^{-a}c$. In $d$ tries $A^{ab}$ is found.
This solves the Diffie-Hellman problem.
\end{proof}

\section{Security of the proposed ElGamal cryptosystem}\label{sec1}
This paper is primarily focused on the discrete logarithm
problem in the automorphism group of a vector space over a finite field. There are
two kinds of attack on the discrete logarithm problem.
\begin{itemize}
\item[(i)] The ``so called'' generic attacks, like
the \emph{Pollard's rho} algorithm. These attacks use a \emph{black box} group
algorithm. The time complexity of these algorithms is about the same as the
square-root of the size of the group.
\item[(ii)] The other one is an \emph{index calculus} attack. These attacks do not work in any group.
\end{itemize}
Black box group algorithms work in any group, hence they will
work in SC$(d,q)$ as well. The most efficient way to use black box attack on the discrete logarithm problem, is to use the Pohlig-Hellman algorithm~\cite[Section 2.9]{silverman} first. This reduces the discrete logarithm problem to the prime divisors of the order of the element (the base for the discrete logarithm) and then use the Chinese remainder theorem to construct a solution for the original discrete logarithm problem. One can use the Pollard's rho algorithm to solve the discrete logarithm problem in the prime divisors. So the whole process can be summarized as follows: the security of the discrete logarithm against generic attacks, is the security of the discrete logarithm in the largest prime divisor of the order. 
We cannot prevent these attacks. These generic attacks are of exponential time complexity and are not of
much concern.

The biggest threat to any cryptosystem using the
discrete logarithm problem is a subexponential attack like the index
calculus attack~\cite{oliver}. It is often argued~\cite{koblitz,joseph}
that there is no 
index calculus algorithm for most elliptic curve cryptosystems
that has subexponential time complexity. This fact is often
used to promote elliptic curve cryptosystem over a finite
field cryptosystem~\cite{koblitz}. So, the best we can hope from the
discrete logarithm problem in SC$(d,q)$ is, there is no index calculus attack or the index calculus attack
becomes exponential.

The expected asymptotic complexity of the index calculus
algorithm in $\mathbb{F}_{q^k}$ is 
$\exp{\left((c+o(1))(\log{q}^k)^\frac{1}{3}(\log\log{q}^k)^\frac{2}{3}\right)}$
, where $c$ is a constant, see~\cite{oliver} and~\cite[Section 4]{koblitz}.
If the degree of the extension, $k$, is greater than
$\log^2{q}$ then the asymptotic time complexity of the index calculus
algorithm becomes exponential. In our 
case this means, if $d>\log^2{q}$, the asymptotic complexity
of the index calculus algorithm on circulant matrices of size $d$ becomes exponential.

If we choose $d\geq\log^2{q}$, then the discrete logarithm problem in SC$(d,q)$ becomes as
secure as the ElGamal over an elliptic curve, because the index
calculus algorithm is exponential; otherwise we can not
guarantee. But on the other hand, in the proposed cryptosystem,
encryption and decryption 
works in $\mathbb{F}_q$ and breaking the cryptosystem depends on
solving a discrete logarithm problem in $\mathbb{F}_{q^{d-1}}$. Since,
implementing the index calculus attack becomes harder as the field
gets bigger. It is clear that if we take 
$d\ll\log^2{q}$, then the cryptosystem is much more secure than the
ElGamal cryptosystem over $\mathbb{F}_q$.

\section{Is the ElGamal cryptosystem over SC$(d,q)$ really useful?}\label{sec2}
For a circulant matrix over a field of even characteristic, squaring is fast. It is shown~\cite[Theorem
2.2]{ayan} that, if
$A=\text{circ}\left(a_0,a_1,\ldots,a_{d-1}\right)$, then
$A^2=\text{circ}\left(a^2_{\pi(0)},a^2_{\pi(1)},\ldots,a^2_{\pi(d-1)}\right)$. Where
$\pi$ is a permutation of $\{0,1,2,\ldots,d-1\}$. Now the $a_i$s
belong to the underlying field $\mathbb{F}_q$ of characteristic $2$.
In this field, squaring
is just a cyclic shift using a \emph{normal basis}~\cite[Chapter 4]{mullin} representation of the
field elements.

It was shown by Mahalanobis~\cite{ayan}, that if five
conditions are satisfied, then the security of the discrete logarithm
problem for circulant matrices of size $d$ over $\mathbb{F}_q$ is the
same as the discrete logarithm problem in $\mathbb{F}_{q^{d-1}}$.

The five conditions are:
\begin{itemize}
\item[a.] The circulant matrix should have determinant 1.
\item[b.] The matrix $A$ should have row-sum 1.
\item[c.] The integer $d$ is prime.
\item[d.] The polynomial $\dfrac{\chi_A}{x-1}$ is irreducible.
\item[e.] $q$ is primitive mod $d$.
\end{itemize}
In short, the argument for these five conditions are the following:

Let $A=\text{circ}\left(a_0,a_1,\dots,a_{d-1}\right)$ and let $\chi_A$
be the characteristic polynomial of $A$. It is easy to see that the row-sum, $a_0+a_1+\cdots+a_{d-1}$,
sum of all elements in a row, is constant for a circulant matrix. This
row-sum, $\alpha$ is an eigenvalue of $A$ and belongs to
$\mathbb{F}_q$. Clearly, $\alpha^m$ is an eigenvalue of $A^m$.
This $\alpha$ and $\alpha^m$ can
reduce a part of the discrete logarithm problem in $A$, to a discrete logarithm
problem in the field $\mathbb{F}_q$. If the row-sum is 1, then there
is no such issue. This is the reason behind the condition, the row-sum
is 1.

Now assume that $\dfrac{\chi_A}{x-1}=f_1^{e_1}f_2^{e_2}\ldots
f_n^{e_n}$, where each $f_i$ is an irreducible polynomial and $e_i$s  are positive
integers\footnote{Condition c.~ensures that $e_i=1$ for all $i$.}. Then it follows,
the discrete logarithm problem in $A$, can be reduced to discrete
logarithm problems in $\dfrac{\mathbb{F}_q[x]}{f_i}$, for each $i$. Then
one can solve the individual discrete logarithms in extensions
of $\mathbb{F}_q$, put those solutions together using the Chinese remainder theorem and solve the discrete logarithm problem in $A$. The degree of these extensions, the size of which provides us
with the better security, is maximized when $\dfrac{\chi_A}{x-1}$ is
irreducible. This is the reason for $\dfrac{\chi_A}{x-1}$ is irreducible.

The ring of circulant matrices is isomorphic to $\dfrac{\mathbb{F}_q[x]}{x^d-1}$,
moreover $\dfrac{\mathbb{F}_q[x]}{x^d-1}$ is isomorphic to
$\dfrac{\mathbb{F}_q[x]}{x-1}\times\dfrac{\mathbb{F}_q[x]}{\Phi(x)}$,
where $\Phi(x)=\dfrac{x^d-1}{x-1}$ is the $d\textsuperscript{th}$ cyclotomic polynomial. If $d$ is prime and
$q$ is primitive modulo $d$, then the cyclotomic polynomial $\Phi(x)$
is irreducible. In this case, the discrete logarithm problem in
circulant matrices reduce to the discrete logarithm problem in
$\mathbb{F}_{q^{d-1}}$. 
\subsection{What are the advantages of using circulant matrices?}\label{sec21}
The advantages of using circulant matrices are:
\begin{itemize}
\item Multiplying circulant matrices of size $d$ over $\mathbb{F}_q$ is twice as fast compared to
  multiplication in the field of size $\mathbb{F}_{q^d}$.
\item Computing
the inverse of a circulant matrix is easy.
\end{itemize}
 Since any circulant
matrix $A$ can be represented as a
polynomial of the form $f(x)=c_0+c_1x+\ldots+c_{d-1}x^{d-1}$. This
polynomial is invertible, implies that, 
$\gcd\left(f(x),x^d-1\right)=1$. Then one can use the extended
Euclid's algorithm to find the inverse. In our cryptosystem, we need to
find that inverse, and it is easily computable.

We now compare the following three cryptosystems for security and
speed. We do not
compare the key sizes and the size of the ciphertext, as these can be
decided easily.
\begin{itemize}
\item[1.] The ElGamal cryptosystem using the circulant matrices of
  size $d$ over $\mathbb{F}_q$.
\item[2.] The ElGamal cryptosystem using the group of an elliptic
  curve.
\item[3.] The ElGamal cryptosystem over $\mathbb{F}_{q^d}$.
\end{itemize}
\subsection{ElGamal over $\mathbb{F}_{q^d}$ vs.~the circulants of size
  $d$ over $\mathbb{F}_q$} Clearly the circulants are the winner in
this case. The circulants provide almost the same security as the
ElGamal over the finite field $\mathbb{F}_{q^d}$, but
multiplication in the circulants is twice as fast compared to the multiplication in the finite field $\mathbb{F}_{q^d}$. See Silverman~\cite{silverman1,silverman2} for more details.

To understand the difference, we need to understand the
standard field multiplication. A field $\mathbb{F}_{q^d}$ over
$\mathbb{F}_q$, an extension of degree $d$, is a commutative algebra of
dimension $d$ over $\mathbb{F}_q$. Let
$\alpha_0,\alpha_1,\ldots,\alpha_{d-1}$ be a basis of $\mathbb{F}_{q^d}$
over $\mathbb{F}_q$. Let
$A:=\left(a_0\alpha_0+a_1\alpha_1+\cdots+a_{d-1}\alpha_{d-1}\right)$,
$B:=\left(b_0\alpha_0+b_1\alpha_1+\cdots+b_{d-1}\alpha_{d-1}\right)$
and $$C:=A\cdot
B=\left(c_0\alpha_0+c_1\alpha_1+\cdots+c_{d-1}\alpha_{d-1}\right)$$ be elements of $\mathbb{F}_{q^d}$.

The objective of multiplication is to find $c_k$ for
$k=0,1,\ldots,(d-1)$. Now notice that, if
\[\alpha_i\alpha_j=\sum\limits_{k=0}^{d-1}t_{ij}^k\alpha_k,\] 
we can define a $d\times d$ matrix $T_k$ as $\{t_{ij}^k\}_{ij}$.
It follows that $c_k=AT_kB^{t}$. The number of nonzero entries in the
matrix $T_k$, which is constant over $k$, is called the \emph{complexity}
of the field multiplication~\cite[Chapter 5]{mullin}. The following
theorem is well known~\cite[Theorem 5.1]{mullin}:
\begin{theorem}
For any normal basis $N$ of $\mathbb{F}_{q^d}$ over $\mathbb{F}_q$, the
complexity of multiplication is at least $2d-1$.
\end{theorem}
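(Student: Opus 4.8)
The plan is to package all the products $\alpha_0\alpha_i$ into a single $d\times d$ matrix, show that this matrix is invertible, and then read the bound $2d-1$ off its column sums. Write the normal basis as $N=\{\alpha_i=\alpha^{q^i}:i=0,1,\ldots,d-1\}$ for a suitable generator $\alpha$, with indices taken modulo $d$, set $\alpha_0\alpha_i=\sum_{j}t_{ij}\alpha_j$ and let $M=(t_{ij})$ be the $d\times d$ matrix of these structure constants. First I would exploit the Frobenius symmetry of a normal basis: since $\alpha_a\alpha_b=(\alpha_0\alpha_{b-a})^{q^a}$ and raising to the $q^a$-th power cyclically permutes $N$, one obtains $t_{ab}^{k}=t_{(b-a),(k-a)}$. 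For each fixed $k$ the substitution $(a,b)\mapsto(b-a,k-a)$ is a bijection of $\mathbb{Z}_d\times\mathbb{Z}_d$, so the number of nonzero entries of $T_k$ equals the number of nonzero entries of $M$; this simultaneously recovers the assertion that this number is constant in $k$, and it reduces the theorem to showing that $M$ has at least $2d-1$ nonzero entries.

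The \emph{crucial} observation is that $M$ is invertible. Its $i$-th row is the coordinate vector, in the basis $N$, of $\alpha_0\alpha_i=\alpha\cdot\alpha^{q^i}$, and multiplication by the nonzero element $\alpha$ is an invertible $\mathbb{F}_q$-linear map of $\mathbb{F}_{q^d}$, so it carries the basis $N$ to the basis $\{\alpha\cdot\alpha^{q^i}\}_i$. Hence the rows of $M$ form a basis of $\mathbb{F}_q^d$ and $M$ is nonsingular; in particular no row and no column of $M$ is identically zero.

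Next I would compute the column sums of $M$. Let $\theta=\mathrm{Tr}_{\mathbb{F}_{q^d}/\mathbb{F}_q}(\alpha)=\sum_i\alpha^{q^i}$; this lies in $\mathbb{F}_q$ and is nonzero because the $\alpha^{q^i}$ are linearly independent over $\mathbb{F}_q$. Summing the defining relations over $i$ gives $\alpha\theta=\sum_i\alpha\alpha_i=\sum_j\bigl(\sum_i t_{ij}\bigr)\alpha_j$, while directly $\alpha\theta=\theta\alpha_0$; comparing coordinates in the basis $N$ yields $\sum_i t_{i0}=\theta\neq0$ and $\sum_i t_{ij}=0$ for $1\le j\le d-1$. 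Now the count is immediate for $d\ge2$ (the case $d=1$ being trivial): column $0$ is nonzero, hence has at least one nonzero entry; and for $1\le j\le d-1$ column $j$ is nonzero yet its entries sum to $0$, so it cannot consist of a single nonzero entry and therefore has at least two. Adding up, $M$ has at least $1+2(d-1)=2d-1$ nonzero entries.

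I expect the only genuine obstacle to be the invertibility of $M$: without it one could imagine a vanishing column, and the argument would deliver only the weaker bound $d$. Everything else is the cyclic-symmetry bookkeeping of the first step together with the elementary remark that a nonzero vector whose coordinates sum to zero has at least two nonzero coordinates.
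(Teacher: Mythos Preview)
The paper does not actually prove this theorem: it is quoted as a well-known result and attributed to \cite[Theorem~5.1]{mullin}, with no argument given. So there is no in-paper proof to compare against.

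Your proposal is correct and is, in fact, essentially the classical proof of the $2d-1$ bound (the argument going back to Mullin, Onyszchuk, Vanstone and Wilson). The three ingredients you isolate are exactly the standard ones: (i) the Frobenius identity $t_{ab}^{k}=t_{(b-a),(k-a)}$, which simultaneously shows the count is independent of $k$ and reduces everything to the single matrix $M=(t_{ij})$ recording the products $\alpha_0\alpha_i$; (ii) invertibility of $M$, obtained by recognising its rows as the coordinate vectors of the basis $\{\alpha\cdot\alpha^{q^i}\}$; and (iii) the trace computation $\sum_i t_{i0}=\mathrm{Tr}(\alpha)\neq 0$ and $\sum_i t_{ij}=0$ for $j\neq 0$, which forces at least two nonzero entries in each of the $d-1$ ``zero-sum'' columns and one in column $0$. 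Your remark that invertibility is the crucial point is apt: it is what guarantees no column vanishes, and without it the column-sum trick only gives the trivial bound $d$.
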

Note that in an implementation of a field exponentiation, one must use
a normal basis to use the square and multiply algorithm.

In our case, circulants of size $d$ over a finite field
$\mathbb{F}_q$, the situation is much different. We need a normal
basis implementation for $\mathbb{F}_q$. However, to implement
multiplication of two circulants, i.e., multiplication in
$\mathcal{R}=\dfrac{\mathbb{F}_q[x]}{x^d-1}$ we can use the basis
$\left\{1,x,x^2,\ldots,x^{d-1}\right\}$. 

In a very similar way as
before, if $A:=a_0+a_1x+\ldots+a_{d-1}x^{d-1}$ and $B:=b_0+b_1x+\ldots
b_{d-1}x^{d-1}$ then $C:=A\cdot B=c_0+c_1x+\ldots+c_{d-1}x^{d-1}$. Our job is
to compute $c_k$ for $k=0,1,\ldots,d-1$. It follows that 
\begin{equation}c_k=\sum\limits_{i=0}^{d-1}a_ib_j\;\;\text{where}\;\;
  i+j=k\mod d\;\;\text{and}\;\;0\leq i,j\leq d-1\end{equation}
It is now clear that the complexity of the multiplication is $d$.
Compare this to the best case situation for the \emph{optimal normal
  basis}~\cite[Chapter 5]{mullin}, in
which case it is $2d-1$. So multiplying circulants take about half the
time that of finite fields.

It is clear that the keysizes will be the same for both these cryptosystems.
\subsection{The elliptic curve ElGamal vs.~the circulants of size $d$}
In this case there is no clear
winner. On one hand, take the case of embedding degree. For most
elliptic curves the embedding degree is very large.
The embedding degree, that we refer to as the security advantage, for a
circulant is tied up with the size of the matrix. For a matrix of size
$d$, it is $d-1$. So with circulants, it is hard to get very large
embedding degree, without blowing up the size of the matrix. On the
other hand, a very large embedding degree is not always necessary.

On the other hand, in elliptic curves, the order of the group is about the same as
the size of the field. For 80-bit security, we must take the field to
be around $2^{160}$, to defend against any square-root algorithms. In the
case of circulants, the order of a circulant matrix can be 
large. This enables us to \textbf{use smaller field for the same security}. In
circulants, one can use the extended Euclid's algorithm to compute the
inverse.

So, as we said before, we are not in a position to declare a clear winner in
this case. However, if the size of the field is important in the
implementation, and a moderate embedding degree suffices for security,
then circulants are a \textbf{little ahead in the game}. We explain this by some examples in the next section.

It is clear that the keysize for circulant matrices will be larger than that of the elliptic curve cryptosystem, both satisfying the following:
\begin{description} 
\item[1] Security of $80$ bits or more from generic algorithms. 
\item[2] Security from index-calculus comparable to the field $\mathbb{F}_{2^{1000}}$, i.e., \textbf{index calculus security} of $1000$ bits.
\end{description} 
\section{An algorithm}
Recall that C$(d,q)$ is isomorphic to $\dfrac{\mathbb{F}_q[x]}{x-1}\times\dfrac{\mathbb{F}_q[x]}{\Phi(x)}$. We now describe an algorithm to find a circulant matrix satisfying the above five conditions.
\begin{algorithm}[Construct a circulant matrix satisfying five conditions]\label{algo}\hspace*{\fill}\\

Input $q,d$.
\begin{itemize}
\item construct $\mathbb{F}_q$.
\item $\tau(x)\leftarrow$ A primitive polynomial of degree $d-1$ over $\mathbb{F}_q$.
\item order $\leftarrow$ Order of the determinant of the companion matrix of $\tau(x)$.
\item Use Chinese remainder theorem to find $\psi(x)$ such that $\psi(x)=1\mod(x-1)$ and $\psi(x)=\tau(x)\mod\Phi(x)$.
\item $\psi(x)\leftarrow\psi(x)\mod(x^d-1)$.
\item $A\leftarrow$ The circulant matrix with the first row $\psi(x)$.
\item $A\leftarrow A^{order}$.
\end{itemize}
Output $A$. 
\end{algorithm}
Using Magma~\cite{magma} and Algorithm~\ref{algo}, we were able to compute several circulant matrices over many different fields of characteristic 2. We produce part of that data in Table~\ref{table1}. The row with $q$ is the size of the field extension and the row with $d$ is the size of the circulant matrix over that field extension.

To construct the table, we considered all possible field extensions of size $q$, where $q$ varies from $2^{40}$ to $2^{100}$. For each such extension, we took all the primes, $d$, from $11$ to $50$. We then checked and tabulated the ones for which $q$ is primitive modulo $d$. For every extension $q$ and for all primes $d$, satisfying the primitivity condition, Algorithm~\ref{algo} was used and the output matrix was checked for all the five conditions and moreover the order of the matrix $A$ was found to be at least $q^{d-3}$. So, if $q$ is primitive modulo $d$, our algorithm produces the desired matrix $A$, satisfying all five conditions. The computation was fast on a standard workstation. 
 
\begin{table}
\begin{center}
\begin{tabular}{|l||c|c|c|c|c|c|}
\hline
$q$ & $2^{41}$ & $2^{43}$ & $2^{47}$ & $2^{49}$ & $2^{53}$ & $2^{55}$\\ \hline
$d$ & $11,13,19,$ & $11,13,19,$ & $11,13,19,$ & $11,13,19,$ & $11,13,19,$ & $13,19,$ \\
    & $29,37$  & $29,37$  & $37$  & $37$  & $29,37$  & $29,37$\\ \hline\hline
$q$ & $2^{59}$ & $2^{61}$  & $2^{65}$ & $2^{67}$ & $2^{71}$ & $2^{73}$\\ \hline
$d$ & $11,13,19,$ & $11,13,19,$ & $13,19,$ & $11,13,19,$ & $11,13,19,$ & $11,13,19,$\\
    & $29,37$  & $29,37$ & $29,37$ & $29,37$ & $29,37$ & $29,37$\\ \hline\hline
$q$ & $2^{77}$ & $2^{79}$ & $2^{83}$ & $2^{85}$ & $2^{89}$ & $2^{95}$\\ \hline
$d$ & $11,13,19,$ & $11,13,19,$ & $11,13,19,$ & $11,13,19,$ & $11,13,19,$ & $13,19,29,$\\
    & $37$ & $29,37$ & $37$ & $29,37$ & $29,37$ & $37$\\
\hline
\end{tabular}
\caption{Fields from size $2^{40}$ to $2^{100}$ and matrices from size $11$ to $50$ that satisfy those five conditions.}
\label{table1}
\end{center}
\end{table}
So now it is clear, that there are a lot of choices for parameters for the ElGamal cryptosystem over circulant matrices. We describe our findings with some arbitrary examples. For more data see Table~\ref{table2}.

In the case, $\mathbf{q=2^{89}, d=13}$, we found the largest prime factor of the order of $A$ to be $$7993364465170792998716337691033251350895453313.$$ The base two logarithm of this prime is $152.5$. So even if we use the Pohlig-Hellman algorithm to reduce the discrete logarithm in $A$, to the discrete logarithm problem in the prime factors of the order of $A$, we still have the security very close to the $80$-bit security from generic attacks. The security against the index calculus is the same as in $\mathbb{F}_{2^{1068}}$.

In case of $\mathbf{q=2^{39}, d=29}$, the largest prime factor of $A$ was \[3194753987813988499397428643895659569.\] The logarithm base $2$ of which is about $120$. So from generic attack, the 
security is about $2^{60}$ or sixty bit security. From index calculus the security is the same as the security of a field of size $\mathbb{F}_{2^{1092}}$.

In the case of $\mathbf{q=2^{45}, d=29}$, the largest prime factor of the order of $A$ is $15169173997557864184867895400813639018421$ with more than 60 bit security. The security against the index calculus is equivalent to $\mathbb{F}_{2^{1260}}$.

In the case of $\mathbf{q=2^{97},d=11}$, the largest prime divisor of $A$ is 
\begin{eqnarray*}50996843392805314313033252108853668830963472293743769141-\\06957559915561,\end{eqnarray*} the logarithm base $2$ is $231$. Security from generic attacks is $115$ bits and from index calculus is equivalent to the field $\mathbb{F}_{2^{970}}$, i.e., $970$ bits security.

In the case of $\mathbf{q=2^{43},d=29}$, the largest prime factor of the order is \begin{eqnarray*}1597133026914484603924687622599912490649282490944114-\\1855981389550399714935349,\end{eqnarray*} the logarithm of that is $253$. So this has about $125$ bit security from the generic attacks and $1204$ bit security from index calculus attack.

In the case of $\mathbf{q=2^{29},d=37}$, the largest prime factor is 
\[328017025014102923449988663752960080886511412965881,\] with logarithm $167$, i.e., security of more than $80$ bits from generic attacks and $1044$ bits from index calculus.

Using GAP~\cite{gap4}, we created Table~\ref{table2}. In this table, all extensions $q$, $q$ from $2^{45}$ to $2^{90}$ and all primes from $10$ to $20$ are considered. For those extensions and primes, it was checked if $q$ is primitive mod $d$. If that was so, then the circulant matrix $A$ was constructed and both the generic and the index calculus security was tabulated. 
\begin{center}
{\small 
\begin{table}
\begin{center}
\begin{tabular}{||c|c|c|c||}\hline
Size of the   & Size of the & Logarithm of     & Index-calculus\\
extension $q$ & matrix $d$  &the largest prime & security in bits\\ \hline
              & $11$        & $115$            & $470$\\ \cline{2-4}
$2^{47}$       & $13$        & $77$             & $564$\\ \cline{2-4} 
              & $19$        & $207$            & $846$\\ \hline 
              & $11$        & $157$            & $490$\\ \cline{2-4}
$2^{49}$       & $13$        & $83$             & $588$\\ \cline{2-4}
              & $19$        & $112$            & $882$\\ \hline
$2^{51}$       & $11$        & $92$             & $510$\\ \hline
              & $11$        & $129$            & $530$\\ \cline{2-4}
$2^{53}$       & $13$        & $92$             & $636$\\ \cline{2-4}
              & $19$        & $312$            & $954$\\ \hline
$2^{55}$       & $13$        & $80$             & $660$\\ \cline{2-4}
              & $19$        & $239$            & $990$\\ \hline
$2^{57}$       & $11$        & $123$            & $570$\\ \hline
              & $11$        & $232$            & $590$\\ \cline{2-4}
$2^{59}$       & $13$        & $91$             & $708$\\ \cline{2-4} 
              & $19$        & $262$            & $1062$\\ \hline
              & $11$        & $157$            & $610$\\ \cline{2-4}
$2^{61}$       & $13$        & $120$            & $732$\\ \cline{2-4}
              & $19$        & $294$            & $1098$\\ \hline
$2^{63}$       & $11$        & $123$            & $630$\\ \hline
$2^{65}$       & $13$        & $96$             & $780$\\ \cline{2-4}
              & $19$        & $131$            & $1170$\\ \hline
              & $11$        & $248$            & $670$\\ \cline{2-4}
$2^{67}$      & $13$         & $106$           & $804$\\ \cline{2-4}
             & $19$         & $274$            & $1206$\\ \hline
$2^{69}$      & $11$         & $176$            & $690$\\ \hline
             & $11$         & $242$             & $710$\\ \cline{2-4}
$2^{71}$      & $13$         & $111$           & $852$\\ \cline{2-4}
             & $19$         & $281$            & $1278$\\ \hline
             & $11$         & $184$            & $730$\\ \cline{2-4}
$2^{73}$      & $13$        & $103$            & $876$\\ \cline{2-4}
             & $19$         & $258$            & $1314$\\ \hline
             & $11$        & $184$            & $770$\\ \cline{2-4}
$2^{77}$      & $13$        & $121$            & $924$\\ \cline{2-4}
             & $19$        & $359$            & $1386$\\ \hline
             & $11$        & $279$            & $790$\\ \cline{2-4}
$2^{79}$      & $13$        & $140$            & $948$\\ \cline{2-4}
             & $19$        & $209$            & $1422$\\ \hline
$2^{81}$      & $11$        & $143$           & $810$\\ \hline
             & $11$        & $284$            & $830$\\ \cline{2-4}
$2^{83}$      & $13$        & $132$           & $996$\\ \cline{2-4}
             & $19$        & $443$            & $1494$\\ \hline  
$2^{85}$      & $13$        & $101$           & $1020$\\ \cline{2-4}
             & $19$        & $245$            & $1530$\\ \hline
$2^{87}$      & $11$        & $151$            & $870$\\ \hline 
             & $11$        & $227$           & $890$\\ \cline{2-4}
$2^{89}$      & $13$        & $152$            & $1068$\\ \cline{2-4} 
             & $19$        & $323$            & $1602$\\ \hline

\end{tabular} 
\caption{Security for $q$ from $2^{45}$ to $2^{90}$ and $d$ from $10$ to $20$}
\label{table2}
\end{center}
\end{table}}

\end{center}
\subsection{Complexity of exponentiation of a  circulant matrix of size $d$}
Let us assume, that the circulant matrix of size $d$ is $A$ and we are raising it to power $m$, i.e., compute $A^m$. We are using the square and multiply algorithm. We know that squaring of circulants is free, and multiplication of two circulant matrices of size $d$ takes about $d^2$ field multiplications. The number of multiplications in the exponentiation is the same as the number of ones in the binary expansion of $m$. It is expected that a finite random string of zeros and ones will have about the same number of zeros and ones. So the expected number of ones in the binary expansion of $m$ is $\frac{1}{2}\log_2{m}$. So the expected number of field multiplications required to compute $A^m$ is $\frac{d^2}{2}\log_2{m}$. 
\bibliography{confpaper}
\bibliographystyle{amsplain} 
\end{document}